\newtheorem{thm}{Theorem}[section]
\newtheorem{cor}[thm]{Corollary}
\begin{document}
\title{Time-Windowed Contiguous Hotspot Queries}
\author{Ali Gholami Rudi\thanks{
{Department of Electrical and Computer Engineering},
{Bobol Noshirvani University of Technology}, {Babol, Iran}.
Email: {\tt gholamirudi@nit.ac.ir}.}}
\date{}
\maketitle

\begin{abstract}
A hotspot of a moving entity is a region in which it spends a
significant amount of time.
Given the location of a moving object through a certain time
interval, i.e.~its trajectory, our goal is to find its hotspots.
We consider axis-parallel square hotspots of fixed side length,
which contain the longest contiguous portion of the trajectory.
Gudmundsson, van Kreveld, and Staals (2013)
presented an algorithm to find a hotspot of a trajectory in $O(n \log n)$,
in which $n$ is the number of vertices of the trajectory.
We present an algorithm for answering \emph{time-windowed} hotspot queries,
to find a hotspot in any given time interval.
The algorithm has an approximation factor of $1/2$ and
answers each query with the time complexity $O(\log^2 n)$.
The time complexity of the preprocessing step of the algorithm is $O(n)$.
When the query contains the whole trajectory,
it implies an $O(n)$ algorithm for finding approximate
contiguous hotspots.
\\
\\
{\small \textbf{Keywords}: Trajectory, Hotspot, Geometric algorithms, Time-windowed queries}\\
\end{abstract}

\section{Introduction}
\label{sint}
The identification of popular places or hotspots is an important
preprocessing step for analyzing trajectories
(see \cite{zheng15} for several interesting applications)
and the recent growth of trajectory data sets is increasing the
emphasis on efficient trajectory analysis algorithms,
hotspot identification algorithms not excepted.

Several heuristics have been proposed for identifying hotspots
(e.g.~\cite{li08,yuan11,yuan15}).
To pave the way for devising accurate geometric algorithms for
identifying hotspots,
some authors have formalized the definition of hotspots and
posed several questions about their identification \cite{benkert10,gudmundsson13}.
One of the problems introduced by Gudmundsson et al.~is that of finding a
placement of an axis-aligned square of fixed side length,
which maximizes the time the entity spends inside it without
leaving it \cite{gudmundsson13}.  In other words, the square should contain a
contiguous sub-trajectory with the maximum duration.
They also present an algorithm for this problem with the
time complexity $O(n \log n)$.
Their algorithm relies on the fact that there exists a
hotspot with at least one trajectory vertex on its boundary.
It finds the hotspot by testing
the squares, on one of whose boundaries lies a vertex of the
sub-trajectory.

We study the problem of answering \emph{time-windowed}
hotspot queries.  Each query specifies a time interval and the
goal is finding a hotspot for the sub-trajectory of this interval
(e.g.~for finding the stay point of a bird in August or
finding the hotspot of a mobile device per day, week, and month).
Time-windowed geometric queries have recently attracted much attention.
Bannister et al.~introduce a framework for answering time-windowed
queries and present algorithms for answering such queries
for three problems including convex hull, for which they present an
algorithm that answers each query in poly-logarithmic time and performs the
preprocessing in $O(n\log n)$ time \cite{bannister14}.
For the time-windowed closest pair of points problem,
Chan and Pratt \cite{chan15} present a quadtree-based algorithm in the word-RAM model.
For decision problems, Bokal et al.~present time-windowed algorithms
for deciding hereditary properties (i.e., if a sequence has
the property, all of its subsequences do as well) \cite{bokal15}.
Chan and Prat \cite{chan16} improve Bokal et al.'s results by reducing time-windowed
decision problems to range successor problem and using dynamic data
structures.
None of these results, however, can be used directly for answering
hotspot queries.  Since its goal is finding a hotspot, the problem
cannot be stated as a decision problem and the techniques used for
time-windowed problems like convex hull do not seem applicable to hotspot identification.

In this paper we present an approximation algorithm for answering
time-windowed hotspot queries.  It answers each query with the time
complexity $O(\log^2 n)$ and with an approximation ratio of $1/2$.
The space complexity of the algorithm and the time
complexity of its preprocessing step is $O(n)$.
This also implies a $1/2$-approximation algorithm for finding
contiguous hotspots of whole trajectories in $O(n)$ time.
The low complexity of this algorithm, its practicable data structures,
and small approximation factor makes this algorithm suitable
for large real world trajectory data sets.

The rest of this paper is organized as follows.  In Section~\ref{sprel}
we describe the notation used in this paper and in Section~\ref{salg}
we present our algorithm.  Finally, in Section~\ref{scon} we conclude
this paper and mention possible directions for further studies.

\section{Preliminaries and Notation}
\label{sprel}
Let $T$ be a polygonal trajectory, which describes the location
of a moving entity through a specific time interval.
We use $T(t)$ to denote the entity's location at time $t$
in trajectory $T$.
In polygonal trajectories, the location of the entity is
recorded as different points in time; these we call the
vertices of the trajectory.  These vertices are linearly
interpolated to decide the location of the entity between
two contiguous vertices.  The sub-trajectory that connects
any two contiguous vertices of the trajectory are its edges.

For each vertex $v$ of trajectory $T$,
let $\mathrm{loc}(v)$ denote its location and
$\mathrm{tstamp}(v)$ denote its time-stamp.
With slight abuse of notation, we use vertices and
time-stamps interchangeably.  Thus, $u < v$ for vertices
$u$ and $v$ means $\mathrm{tstamp}(u) < \mathrm{tstamp}(v)$.
For two vertices or time-stamps $u$ and $v$ of trajectory $T$,
$T_{uv}$ denotes the sub-trajectory from $u$ to $v$.

For a square $r$ of fixed side length $s$, the weight of $r$ with respect
to trajectory $T$ is the maximum duration of a contiguous
sub-trajectory of $T$ contained in $r$.
A hotspot of trajectory $T$ is an axis-parallel square of
fixed side length $s$ with the maximum weight.
Every square discussed in this paper has fixed side length $s$
and is axis-parallel.  We may not mention these constraints
explicitly hereafter.

\section{An Algorithm for Answering Time-Windowed Queries}
\label{salg}
To improve the readability, we first present the algorithm with
the assumption that queries start and end with the time-stamp of
a trajectory vertex and prove its approximation factor.
We then discuss how to construct the data structures required in the algorithm.
Finally, we handle general queries that may start or end at a time
different from the time-stamps of all trajectory vertices and
show that this preserves the approximation factor.

\subsection{The Main Algorithm}
\label{sbase}
For any trajectory vertex $v$,
we define $\mathrm{hotend}_- (v)$ to denote the start of a
sub-trajectory of $T$ ending at $v$ with the maximum duration
that can be contained in a square,
$\mathrm{hotsquare}_- (v)$ to denote one such square, and
$\mathrm{hotdur}_- (v)$ to denote its duration.
Similarly,
$\mathrm{hotend}_+ (v)$ denotes the end of a sub-trajectory
of $T$ starting at $v$ with the maximum duration
that can be contained in a square,
$\mathrm{hotsquare}_+ (v)$ denotes one such square, and
$\mathrm{hotdur}_+ (v)$ denotes its duration.
The implementation of these functions is described in
Section~\ref{sprep}.

We also define $\mathrm{hotdur}_- (u, v)$ as the maximum value of
$\mathrm{hotdur}_- (w)$ for all vertices like $w$ in $T_{uv}$,
$\mathrm{hotsquare}_- (u, v)$ as its corresponding square,
and $\mathrm{hotend}_- (u, v)$ as its corresponding starting vertex,
in which $u$ and $v$ are two trajectory vertices.
We define $\mathrm{hotdur}_+ (u, v)$,
$\mathrm{hotsquare}_+ (u, v)$,
and $\mathrm{hotend}_+ (u, v)$ similarly.
The implementation of these functions are also explained in
Section~\ref{sprep}.

We now describe the algorithm for answering query $(u, v)$,
to find an approximate hotspot of sub-trajectory $T_{uv}$.
As mentioned before, here we assume that both $u$ and $v$ are
trajectory vertices.

\begin{enumerate}
\item
If $uv$ is an edge of the trajectory, it is trivial to find
its hotspot by considering the largest portion of the edge $uv$
that can fit in a square.
\item
Let $w$ be a trajectory vertex between $u$ and $v$ such that the
number of vertices in sub-trajectories $T_{uw}$ and $T_{wv}$ differ
by at most one.
A binary search between the vertices of $T_{uv}$ can find $w$.
Note that the duration of sub-trajectories $T_{uw}$ and $T_{wv}$ may
differ greatly.
\item
Let $u'$ be $\mathrm{hotend}_- (w, v)$ and
let $v'$ be $\mathrm{hotend}_+ (u, w)$.
There are three cases to consider.
\begin{enumerate}
\item
If $u' < u$ and $v < v'$,
$\mathrm{hotsquare}_- (w, v)$ contains $T_{uw}$ and
$\mathrm{hotsquare}_+ (u, w)$ contains $T_{wv}$;
let $r$ be $\mathrm{hotsquare}_- (w, v)$ if the duration
of $T_{uw}$ is greater than that of $T_{wv}$
and $\mathrm{hotsquare}_+ (u, w)$, otherwise.
Given that the duration of $T_{uv}$ is the sum of
the durations of $T_{uw}$ and $T_{wv}$,
the weight of $r$ is at least half of the weight of
the hotspot of $T_{uv}$.
\item
Now suppose $u < u'$ and $v' < v$.
Any hotspot of $T_{uv}$ either starts at a vertex of $T_{uw}$
or ends at a vertex of $T_{wv}$.  Consider the first case:
if a hotspot of $T_{uv}$ starts at a vertex of $T_{uw}$,
its weight should be equal to $\mathrm{hotdur}_+ (u, w)$.
For the second case, we can similarly argue that the weight of
a hotspot that ends at a vertex of $T_{wv}$ is $\mathrm{hotdur}_- (w, v)$.
Therefore, we can return either $\mathrm{hotsquare}_+ (u, w)$
or $\mathrm{hotsquare}_- (w, v)$ based on the relative values of
$\mathrm{hotdur}_+ (u, w)$ and $\mathrm{hotdur}_- (w, v)$.
\item
The remaining case is when $u' < u$ and $v' < v$ (the case when
$u < u'$ and $v' < v$ is similar and is omitted for brevity).
Again, any hotspot of $T_{uv}$ either starts at a vertex of $T_{uw}$
or ends at a vertex of $T_{wv}$.
If a hotspot of $T_{uv}$ starts at a vertex of $T_{uw}$,
its weight equals $\mathrm{hotdur}_+ (u, w)$.
Otherwise, the hotspot should start and end at two vertices of $T_{wv}$.
We perform this algorithm recursively for the interval $(w, v)$ to
find the approximate hotspot $r$ for this interval.
We can return either $\mathrm{hotend}_+ (u, w)$ or $r$ based on
their weight.
\end{enumerate}
\end{enumerate}

The time complexity and approximation factor of this algorithm
is shown in Theorem~\ref{tbase}.

\begin{thm}
\label{tbase}
Given a trajectory $T$ and after some preprocessing for
computing the functions $\mathrm{hotdur}$, $\mathrm{hotsquare}$,
and $\mathrm{hotend}$,
for each query $(u, v)$ we can find a square, whose weight is
at least half of the weight of the hotspot of $T_{uv}$,
with the time complexity $O(\log^2 n)$.
\end{thm}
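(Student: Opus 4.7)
The plan is to prove the $\tfrac{1}{2}$-approximation and the $O(\log^2 n)$ query bound by a case analysis on the three branches of step~3, together with a recursion argument. Throughout, let $\mathrm{OPT}$ denote the weight of a hotspot of $T_{uv}$.

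In case (a), by the hypothesis $u' < u$ the square $\mathrm{hotsquare}_-(w,v)$ contains the sub-trajectory from $u'$ to some vertex of $T_{wv}$, which in turn contains $T_{uw}$; hence $\mathrm{hotsquare}_-(w,v)$ has weight at least $\mathrm{dur}(T_{uw})$ with respect to $T_{uv}$. Symmetrically, using $v < v'$, the square $\mathrm{hotsquare}_+(u,w)$ has weight at least $\mathrm{dur}(T_{wv})$. Since $\mathrm{dur}(T_{uw}) + \mathrm{dur}(T_{wv}) = \mathrm{dur}(T_{uv}) \ge \mathrm{OPT}$, the larger of the two squares achieves weight at least $\tfrac{1}{2}\mathrm{OPT}$.

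Cases (b) and (c) rest on the structural observation that any hotspot of $T_{uv}$ has a realizing sub-trajectory that either starts at a vertex of $T_{uw}$ or ends at a vertex of $T_{wv}$. For a hotspot starting at a vertex $w_1 \in T_{uw}$, the upper bound $\text{weight} \le \mathrm{hotdur}_+(u,w)$ follows from the definition of $\mathrm{hotdur}_+$. The matching lower bound exploits the hypothesis $v' < v$: the sub-trajectory certifying $\mathrm{hotdur}_+(u,w)$ starts at some vertex of $T_{uw}$ and ends at $v' < v$, so it lies inside $T_{uv}$ and witnesses weight at least $\mathrm{hotdur}_+(u,w)$. In case (b), a symmetric argument using $u < u'$ handles hotspots ending in $T_{wv}$, giving $\mathrm{OPT} = \max(\mathrm{hotdur}_+(u,w),\,\mathrm{hotdur}_-(w,v))$, which is exactly what the algorithm returns. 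In case (c), the same $v' < v$ argument still gives the optimal weight among hotspots starting in $T_{uw}$, while the alternative is that the hotspot lies entirely within $T_{wv}$; its weight is then at most the hotspot weight of $T_{wv}$, and by induction on the vertex count of the query range the recursive call returns a square of weight at least half of that, so the better of the two candidates is at least $\tfrac{1}{2}\mathrm{OPT}$.

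For the time bound, each non-recursive invocation performs a constant number of hot-function queries and one binary search to locate $w$, each costing $O(\log n)$ under the preprocessing of Section~\ref{sprep}; since the case-(c) recursion halves the vertex count, the recursion depth is $O(\log n)$ and the total query time is $O(\log^2 n)$. The step I expect to demand the most care is the reverse inequality $\text{weight} \ge \mathrm{hotdur}_+(u,w)$ in cases (b) and (c): the argument requires explicit use of $v' < v$ to certify that the sub-trajectory realizing $\mathrm{hotdur}_+(u,w)$ actually lies inside $T_{uv}$ rather than extending past~$v$, and without this certification the hot-function values would merely upper-bound OPT instead of meeting it.
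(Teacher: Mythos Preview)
Your proposal is correct and follows the same case analysis and recursion argument as the paper; your explicit justification that $v' < v$ forces the sub-trajectory realizing $\mathrm{hotdur}_+(u,w)$ to lie inside $T_{uv}$ is in fact more careful than the paper's terse treatment of that point. One minor inaccuracy: under the Cartesian-tree RMQ preprocessing of Section~\ref{sprep} the hot-function queries cost $O(1)$, not $O(\log n)$, so the $O(\log n)$ per-level cost comes solely from the binary search for $w$---but this does not affect your final $O(\log^2 n)$ bound.
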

\begin{proof}
The correctness of the algorithm is explained in its description.
For the approximation ratio, consider the tree formed by the
recursive invocations of the algorithm for a query.
The steps of the algorithm that return an approximate result
are 3.a, and 3.b, and the first case of 3.c, which appear only once
and as leaves in this tree.
Therefore, the weight of the square returned by the algorithm
is at least half of the weight of a hotspot.

For the time complexity, note that the algorithm is invoked
recursively only once in step 3.c and for a query containing half as
many points as the original query.
In each invocation, $\mathrm{hotend}$ and $\mathrm{hotdur}$ functions,
the time complexity of both of which is $O(1)$,
are called a constant number of times.
Therefore, the time complexity of answering a query containing
$m$ vertices is $T(m) \le T(m / 2) + O(\log m)$ (the $O(\log m)$
term is for finding $w$ in step 2).
Solving this recurrence yields $T(n) = O(\log^2 n)$ as required.
\end{proof}

\subsection{Preprocessing}
\label{sprep}
We next show how to implement functions
$\mathrm{hotend}$, $\mathrm{hotdur}$, and $\mathrm{hotsquare}$ for
single vertices.  The multi-vertex version of these functions
can be implemented using Range Minimum Query (RMQ) data structures.
RMQ data structures support finding the minimum (or the maximum) of
any contiguous interval of a sequence.
Using Cartesian trees, RMQ can be implemented with linear space and
$O(1)$ query complexity (for details, consult \cite{bender00}).

In the following algorithm, we use MinQueue data structure,
supporting the following operations: $\textrm{Insert}$ for inserting an item,
$\textrm{Remove}$ for removing the oldest item, and $\textrm{Min}$
for finding the item with the minimum value in the queue (MaxQueue data
structure is similar with a $\textrm{Max}$ operation instead).
There exists a clever implementation of MinQueue (and MaxQueue) using
two stacks, in which the time complexity of all three operations
is $O(1)$.

We now describe how to compute $\mathrm{hotdur}_-(v)$.
$\mathrm{hotend}_-(v)$ and $\mathrm{hotsquare}_-(v)$ can
be computed in parallel but are omitted for brevity.
Also note that $\mathrm{hotdur}_+(v)$, $\mathrm{hotend}_+(v)$,
and $\mathrm{hotsquare}_+(v)$ can be implemented similarly
by negating the time-stamps of the vertices.
Suppose $\textit{minx}$ and $\textit{miny}$ are instances of MinQueue
and $\textit{maxx}$ and $\textit{maxy}$ are instances MaxQueue and are initially empty.
The following steps are performed for every vertex of $T$
ordered by their time-stamps.
\begin{enumerate}
\item
Define $(x, y)$ as $\textrm{loc}(v)$.
Insert $v$ with the value $x$ into $\textit{minx}$ and $\textit{maxx}$.
Insert $v$ with the value $y$ into $\textit{miny}$ and $\textit{maxy}$.
\item
Repeatedly remove the oldest items from each of the four queues,
until both $\textrm{Max}(\textit{maxx}) - \textrm{Min}(\textit{minx})$ and
$\textrm{Max}(\textit{maxy}) - \textrm{Min}(\textit{miny})$ are at most $s$.
Defining $u$ as the oldest item in any of the queues,
the sub-trajectory $T_{uv}$ is the longest that
ends at vertex $v$, starts at a vertex of $T$, and
can be contained in a square.
\item
Let $u'$ be the vertex before $u$ in $T$ (the last vertex
removed from the queues).
Based on the condition in the previous step,
$T_{u'v}$ cannot be contained in a square but $T_{uv}$ can be.
To find $\textrm{hotend}_-(v)$, we need to find $p$,
the earliest point on the edge $u'u$ such that $T_{pv}$
can be contained in a square.
To do so, consider the four ways of aligning a corner of a square
with a corresponding corner of the bounding box of $T_{uv}$,
and choose the one that contains the longest portion of $u'u$.
\end{enumerate}

\begin{thm}
\label{tprep}
The time and space complexity of the preprocessing step for
functions $\mathrm{hotdur}$, $\mathrm{hotsquare}$,
and $\mathrm{hotend}$ is $O(n)$.
\end{thm}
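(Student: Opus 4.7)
The plan is to split the preprocessing into two independent components matching the two constructions described in Section~\ref{sprep}, and bound each by $O(n)$. First I would analyze the sweep that produces the single-vertex values $\mathrm{hotdur}_-(v)$, $\mathrm{hotend}_-(v)$, $\mathrm{hotsquare}_-(v)$ (and, by symmetry via negated time-stamps, their ``$+$'' counterparts). Then I would account for the construction of the data structure used to extend these to the multi-vertex versions $\mathrm{hotdur}_-(u,v)$ etc.

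For the sweep, I would use an amortized charging argument. Step~1 inserts each trajectory vertex into each of the four monotone queues exactly once, contributing $O(n)$ work across the whole traversal. Step~2's inner loop is the only one whose per-iteration cost is not obviously constant, but each vertex can be removed from a given queue at most once during its lifetime, so the total number of Remove calls over the entire sweep is $O(n)$. Combined with the two-stack implementation of MinQueue/MaxQueue that supports Insert, Remove, and Min/Max in $O(1)$ amortized time, this bounds the total queue work by $O(n)$. Step~3 is a constant amount of geometric case analysis on the four corner alignments of a square with the bounding box of $T_{uv}$, performed once per vertex. Space is $O(n)$ because each queue stores at most $n$ constant-size records, and the output arrays indexed by vertex are themselves of size $O(n)$.

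For the multi-vertex functions, I would build a Cartesian-tree-based RMQ structure on the sequence of $\mathrm{hotdur}_-(v)$ values indexed in trajectory order, and a second one on $\mathrm{hotdur}_+(v)$. By the result of Bender and Farach-Colton~\cite{bender00} cited in Section~\ref{sprep}, this takes $O(n)$ time and $O(n)$ space, and each argmax query runs in $O(1)$ time; by storing the associated $\mathrm{hotsquare}$ and $\mathrm{hotend}$ fields alongside each entry, a single query recovers all three multi-vertex values at once. The main obstacle is the amortized analysis of step~2's unbounded inner loop, which the standard insertion-pays-for-removal potential argument dispatches cleanly; once that is in hand, summing the two $O(n)$ contributions gives the theorem.
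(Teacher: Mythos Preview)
Your proposal is correct and follows essentially the same argument as the paper: the paper's proof likewise observes that every step is $O(1)$ per vertex except step~2, bounds step~2 by the amortized insert-once/remove-once counting on the four queues, and then appeals to the linear-time, linear-space Cartesian-tree RMQ for the multi-vertex versions. Your write-up is simply a more detailed rendering of the same reasoning.
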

\begin{proof}
All steps of the preprocessing perform $O(1)$ computation for
each vertex except step 2, which may extract many items from
the queues.  However, since only $n$ items are inserted into
each queue and each item can be removed at most once,
$O(n)$ items are removed from the queues during the whole algorithm.
Therefore, the time complexity of the algorithm is $O(n)$.
Since the Cartesian tree-based RMQ implementation of the
multi-vertex version of the functions has linear space and time
complexity, the time and space complexity of the preprocessing
is thus likewise linear.
\end{proof}

\subsection{Handling General Queries}
Theorem~\ref{tnonaligned} shows how to handle queries
whose start or end does not coincide with the time-stamp
of a trajectory vertex.
\begin{thm}
\label{tnonaligned}
After $O(n)$ preprocessing for a trajectory $T$ with $n$ vertices,
time-windowed hotspot queries can be answered approximately
with the time complexity $O(\log^2 n)$.
Each query is specified as two time-stamps $x$ and $y$ ($x < y$),
which may not coincide with the time-stamp of any trajectory vertex.
The algorithm returns a square whose weight is
at least half of the weight of the hotspots of $T_{xy}$.
\end{thm}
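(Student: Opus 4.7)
The plan is to reduce a general query to the vertex-aligned case of Theorem~\ref{tbase}, augmenting the answer with a constant number of simple boundary candidates that account for the partial edges at the ends of $[x, y]$.

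First, using a sorted array of vertex time-stamps built during preprocessing, binary-search in $O(\log n)$ time for $u$, the earliest vertex with $\mathrm{tstamp}(u) \geq x$, and $v$, the latest vertex with $\mathrm{tstamp}(v) \leq y$. If no vertex falls in $[x, y]$, the sub-trajectory $T_{xy}$ lies on a single edge and its hotspot is computed exactly in $O(1)$. Otherwise, write $T_{xy}$ as the concatenation of a partial prefix $e_L$ from $T(x)$ to $u$, the vertex-aligned sub-trajectory $T_{uv}$, and a partial suffix $e_R$ from $v$ to $T(y)$.

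The algorithm then assembles five candidate squares and returns the one of maximum weight with respect to $T_{xy}$: (i) $R_0$, the output of Theorem~\ref{tbase} on query $(u, v)$; (ii) $R_L^0$ and $R_R^0$, the longest sub-portions of the straight edges $e_L$ and $e_R$ alone that fit in a square, each found in $O(1)$ from the geometry of the respective segment; and (iii) $R_L^1 := \mathrm{hotsquare}_+(u)$ and $R_R^1 := \mathrm{hotsquare}_-(v)$, whose weights inside $T_{xy}$ are at least $\min(\mathrm{hotdur}_+(u), y - \mathrm{tstamp}(u))$ and $\min(\mathrm{hotdur}_-(v), \mathrm{tstamp}(v) - x)$, respectively, and which are retrievable in $O(1)$ from the preprocessing of Section~\ref{sprep}. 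The total query time is $O(\log^2 n)$, dominated by the invocation of Theorem~\ref{tbase}; the preprocessing is unchanged.

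For the approximation ratio, let the true hotspot be $T_{[a^*, b^*]}$ with duration $w^*$, and split on whether $a^* < \mathrm{tstamp}(u)$ and whether $b^* > \mathrm{tstamp}(v)$. If both fail, $T_{[a^*, b^*]} \subseteq T_{uv}$ and $R_0 \geq w^*/2$ by Theorem~\ref{tbase}. If exactly one holds, say $a^* < \mathrm{tstamp}(u)$ and $b^* \leq \mathrm{tstamp}(v)$, then $w^* = (\mathrm{tstamp}(u) - a^*) + (b^* - \mathrm{tstamp}(u))$, and $R_L^0, R_L^1$ cover the two summands (the first because $T_{[a^*, \mathrm{tstamp}(u)]}$ is a portion of $e_L$ fitting in a square, the second because $T_{[\mathrm{tstamp}(u), b^*]}$ fits in a square and ends before $y$), so $\max(R_L^0, R_L^1) \geq w^*/2$. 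The main obstacle is the remaining case where $a^* < \mathrm{tstamp}(u)$ and $b^* > \mathrm{tstamp}(v)$ simultaneously and the hotspot straddles both boundaries; here the key observation is that the hotspot's square contains all of $T_{uv}$, so $\mathrm{hotdur}_+(u) \geq b^* - \mathrm{tstamp}(u)$ and $\mathrm{hotdur}_-(v) \geq \mathrm{tstamp}(v) - a^*$, giving $R_L^1 + R_R^1 \geq w^* + (\mathrm{tstamp}(v) - \mathrm{tstamp}(u))$ and hence $\max(R_L^1, R_R^1) \geq w^*/2$.
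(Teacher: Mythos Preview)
Your argument is correct and follows essentially the same route as the paper: locate the bracketing vertices $u,v$ by binary search, invoke Theorem~\ref{tbase} on $T_{uv}$, and cover the cases where the optimum overhangs an endpoint using the precomputed $\mathrm{hotsquare}_{\pm}$ values at $u$ and $v$. The only cosmetic difference is that where the paper pairs $\mathrm{hotsquare}_+(u)$ with $\mathrm{hotsquare}_-(u)$ (and symmetrically at $v$), you instead pair $\mathrm{hotsquare}_+(u)$ with the edge-only candidate $R_L^0$; this works for the same reason, since the left overhang $T_{[a^*,u]}$ lies on a single edge. One tiny omission: in your ``exactly one holds'' case you implicitly assume $b^*\ge \mathrm{tstamp}(u)$ when splitting $w^*$ into two nonnegative summands; if $b^*<\mathrm{tstamp}(u)$ the optimum lies entirely on $e_L$ and is already dominated by $R_L^0$, so the conclusion still holds.
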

\begin{proof}
For the query $(x, y)$,
let $u$ be the first vertex on the trajectory at or after $x$ and $v$ be
the first vertex at or before $y$ (they can be found using binary
search on the vertices of $T$ in $O(\log n)$).

If the query is totally contained in a trajectory edge,
a hotspot can be found trivially by finding a square that
contains the longest portion of the edge.
Otherwise, suppose square $r$ is a hotspot of $T_{xy}$ and let $x'$
and $y'$ denote the start and end of a sub-trajectory of $T_{xy}$
contained in $r$ with the maximum duration.
There are two cases to consider.
If $u \le x'$ and $y' \le v$, based on Theorem~\ref{tbase} we can
find an approximate hotspot with an approximation factor of $1/2$.
Otherwise, suppose $x' < u$ (handling the case $v < y'$ is
similar and is omitted).
Suppose $T_{x'y'}$ contains $u$ (otherwise we can move
$r$ towards $u$ without changing its weight, since the $r$ is on a single edge).
Clearly, the weight of $r$ equals the sum of the
durations of $T_{x'u}$ and $T_{uy'}$;
the former is at most $\mathrm{hotdur}_- (u)$ and the latter
is at most $\mathrm{hotdur}_+ (u)$.
Therefore, the weight of either $\mathrm{hotsquare}_- (u)$
or $\mathrm{hotsquare}_+ (u)$ with respect to $T_{xy}$ is at
least half of the weight of $r$.
\end{proof}

\section{Concluding Remarks}
\label{scon}
The algorithm presented in this paper is very fast,
even for finding an approximate hotspot of the whole trajectory.
Querying the whole trajectory after preprocessing yields
Corollary~\ref{cwhole}.
\begin{cor}
\label{cwhole}
An approximate contiguous hotspot of a trajectory can be
found with the time complexity $O(n)$ and an approximation ratio of $1/2$.
\end{cor}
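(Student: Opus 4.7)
The plan is to obtain Corollary~\ref{cwhole} as a direct specialization of Theorem~\ref{tnonaligned}: simply run the preprocessing once and then issue a single query that spans the whole trajectory. By Theorem~\ref{tprep}, building the data structures used by $\mathrm{hotdur}$, $\mathrm{hotsquare}$, and $\mathrm{hotend}$ (the MinQueue/MaxQueue sweep plus the Cartesian-tree-based RMQ on top of the resulting arrays) costs $O(n)$ time and space. By Theorem~\ref{tnonaligned}, the query $(x, y)$ with $x$ the time-stamp of the first vertex of $T$ and $y$ the time-stamp of the last one returns a square whose weight is at least half that of the optimal contiguous hotspot of $T_{xy} = T$, and the query itself takes $O(\log^2 n)$ time.

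Adding the two contributions gives a total running time of $O(n) + O(\log^2 n) = O(n)$, matching the bound claimed in the corollary, and the approximation ratio is inherited unchanged from Theorem~\ref{tnonaligned}. Because $x$ and $y$ coincide with trajectory vertices in this special case, no additional work is needed to handle the non-aligned endpoints of that theorem; the query is simply dispatched to the main algorithm of Theorem~\ref{tbase}.

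There is essentially no obstacle to overcome here: the whole argument is a one-line deduction once Theorems~\ref{tprep} and~\ref{tnonaligned} are in hand. The only point worth mentioning explicitly in the write-up is that the $O(\log^2 n)$ query time is subsumed by the $O(n)$ preprocessing cost, so the overall bound is indeed linear rather than $O(n + \log^2 n)$ phrased separately.
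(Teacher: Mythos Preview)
Your proposal is correct and matches the paper's own argument: the corollary is obtained simply by performing the $O(n)$ preprocessing and then issuing a single query over the whole trajectory, whose $O(\log^2 n)$ cost is absorbed into the linear bound. There is nothing to add.
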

Several related problems seem interesting for further investigation.
It may be possible to include the side length of the hotspot $s$ in the query
by returning a sequence from the functions introduced in Section~\ref{sbase};
an algorithm to answer these extended queries would be very interesting.
The approximation ratio, although very good, may be improved.
This looks very important, especially from a practical point of view,
for querying large trajectory data sets.
Also, it seem interesting to answer time-windowed queries of
non-contiguous hotspots (see \cite{gudmundsson13} for more information).

\section*{Acknowledgements}
The author wishes to thank Dal for discussing this problem in Challenging
Thursdays 28.

\bibliographystyle{unsrt}

\end{document}